\newtheorem{theorem}{Theorem}
\newtheorem{cor}[theorem]{Corollary}
\begin{document}

\title{Age-of-Information and Energy Optimization in Digital Twin Edge Networks 
\thanks{This work was supported in part by the grant from the Research Grants Council
(RGC) of the Hong Kong Special Administrative Region, China under Reference No. UGC/FDS16/E02/22, in part by the Research Matching Grant (RMG) under Reference No. CP/2022/2.1, and in part by the Team-based Research Fund under Reference No. TBRF/2024/1.10. This work was completed while Yongna Guo was affiliated with Hong Kong Metropolitan University. (\em{Corresponding author: Yaru Fu})}}
\author{Yongna Guo\IEEEauthorrefmark{1}, Yaru Fu\IEEEauthorrefmark{2}, Yan Zhang\IEEEauthorrefmark{3}, and Tony Q. S. Quek\IEEEauthorrefmark{4} \\
\IEEEauthorrefmark{1}School of Electrical Engineering and Computer Science, KTH Royal Institute of Technology, Stockholm, Sweden\\
\IEEEauthorrefmark{2}School of Science and Technology, Hong Kong Metropolitan University, Hong Kong SAR, China\\
\IEEEauthorrefmark{3}Department of Informatics, University of Oslo, Oslo, Norway\\
\IEEEauthorrefmark{4}Information Systems Technology and Design, Singapore University of Technology and Design, Singapore}

\markboth{}%
{Shell \MakeLowercase{\textit{et al.}}: Bare Demo of IEEEtran.cls for IEEE Journals}

\maketitle

\thispagestyle{empty} 
\begin{abstract}
In this paper, we study the intricate realm of digital twin synchronization and deployment in multi-access edge computing (MEC) networks, with the aim of optimizing and balancing the two performance metrics Age of Information (AoI) and energy efficiency. We jointly consider the problems of edge association, power allocation, and digital twin deployment. However, the inherent randomness of the problem presents a significant challenge in identifying an optimal solution. To address this, we first analyze the feasibility conditions of the optimization problem. We then examine a specific scenario involving a static channel and propose a cyclic scheduling scheme. This enables us to derive the sum AoI in closed form. As a result, the joint optimization problem of edge association and power control is solved optimally by finding a minimum weight perfect matching.
Moreover, we examine the one-shot optimization problem in the contexts of both frequent digital twin migrations and fixed digital twin deployments, and propose an efficient online algorithm to address the general optimization problem. This algorithm effectively reduces system costs by balancing frequent migrations and fixed deployments. Numerical results demonstrate the effectiveness of our proposed scheme in terms of low cost and high efficiency.

\end{abstract}

\begin{IEEEkeywords}
Age of information (AoI), digital twin, energy consumption, multi-access edge computing (MEC).
\end{IEEEkeywords}

\IEEEpeerreviewmaketitle

\section{Introduction}

Digital twin has emerged as a promising technology for enabling intelligent Internet-of-Things (IoT) applications, including innovative transportation, industrial manufacturing, smart cities, and more~\cite{Yang2019IN_6G, Letaief2022JSAC, yfu2023, Wu2021IOTJ_Digital}. By facilitating real-time monitoring and interaction with the physical world, digital twin provides a highly accurate digital representation of its physical counterpart~\cite{Lu2021IOTJ, Alcaraz2022COMST_Digital}. Moreover, integrating advanced machine learning tools in digital twin can enhance decision-making and optimization in the digital realm~\cite{ Wang2022IoT_Mobility, Li2023TMC}. 
To realize digital twin in wireless networks, the integration of digital twin within multi-access edge computing (MEC) networks has garnered significant attention. This digital twin edge network is specifically crafted for smooth incorporation into established MEC standardized architectures, paving the way for intelligent 6G edge network services. In particular, this approach offers the advantage of effectively reducing communication delays, ensuring the high-fidelity of digital twin~\cite{Duong2023MWC,Tang2022OJCOMS, Zhang2023JSAC}. 
Besides, digital twin-assisted MEC plays a crucial role in enhancing decision-making capabilities for computation and communication resource allocation in MEC networks~\cite{ Dai2021TII, Van2022TCOM, Van2023JSAC}. 
For example, in \cite{Dai2021TII}, digital twin technology was applied in MEC networks to enhance energy efficiency. This work utilizes machine learning techniques to analyze the digital twin networks and achieve efficient real-time resource allocation. Furthermore, works \cite{Van2022TCOM} and \cite{Van2023JSAC}  leveraged digital twin technology to minimize transmission delays in ultra-reliable and low-latency communications (URLLC) MEC networks.
To enable digital twin-assisted MEC networks, meticulous digital twin synchronization and deployment design is essential to ensure accurate reflections of the physical networks. 
In \cite{Zheng2023TWC}, the user association problem in vehicle networks was investigated to achieve low-latency digital twin synchronization. In addition, considering user mobility, \cite{Lu2021IOTJ} explored digital twin migration to enhance reliability and minimize latency. This was accomplished through the utilization of machine learning techniques to jointly optimize digital twin placement and migration.
Moreover, in data-centric scenarios such as digital twin, age of information (AoI) has become a vital metric for assessing data freshness. Specifically, AoI measures the time elapsed since the most recent data update at a destination, making it crucial for optimizing real-time systems~\cite{abd2019role}. In light of this, the work~\cite{li2023aoi} introduced an optimization framework that prioritizes AoI-aware metrics to enhance service satisfaction. This framework explicitly addressed the digital twin placement problem in digital twin-assisted MEC networks. Furthermore, the work~\cite{zhang2024aoi} proposed the concept of expected AoI to evaluate model accuracy, thereby optimizing the service provisioning of digital twin networks.

In our research, we extend the use of AoI to evaluate the data freshness of digital twins for the purpose of facilitating digital twin synchronization and deployment. In contrast to prior studies, we take into account the energy cost and aim to jointly minimize the weighted sum of AoI and energy cost. The joint optimization of AoI and energy cost provides a comprehensive model for practical systems, addressing both delay-sensitive and energy-sensitive scenarios. For instance, autonomous driving and augmented reality demand stringent delay requirements, while wearable health monitors and smart IoT systems are constrained by energy limitations. Additionally, we consider a unique digital twin synchronization scenario characterized by a substantial volume of IoT devices requiring synchronization. In this context, optimizing edge association, power control, and digital twin deployment is crucial to achieving energy-efficient and high-fidelity digital twin edge networks. 
However, due to the randomness of the channel gains and the interdependence of the optimization variables, the formulated optimization problem is challenging to solve. To tackle it, we first analyze a specific scenario involving static channels where digital twin migration is not required. In this case, we propose a distinct scheduling policy and derive a closed-form sum AoI. By doing so, we can optimally resolve the joint optimization problem of edge association and power control by identifying a minimum weight perfect matching.
Following that, we analyze the optimization problem for a single time slot, considering both frequent digital twin migration and fixed digital twin deployment. Drawing from this analysis, we propose an efficient online lightweight algorithm for the general problem to strike a balance between these two scenarios. Numerical results validate the effectiveness of our proposed online algorithm in terms of average cost.

The rest of this paper is outlined as follows. We present the system model and problem formulation in Section II. The joint optimization problem is investigated in Section III. The numerical results are provided in Section IV, followed by the conclusion and discussion on future work in Section V.

\section{Network Model, AoI, and Problem Formulation}

\subsection{Network Model}

We consider a digital twin-assisted MEC network as illustrated in Fig.~\ref{system}, which includes $M$ edge servers, indexed by $\mathcal{M}\triangleq\{1,2,\ldots, M\}$, and $K$ IoT devices, indexed by $\mathcal{K}
\triangleq\{1,2,\ldots,K\}$, where $K>>M$. In the beginning, the digital twin for each device is randomly initiated and deployed in the edge servers. Each IoT device has its corresponding digital twin deployed on a specific edge server. To maintain the fidelity of the digital twin service, the devices must synchronize the latest data with their digital twin at the edge servers. Here, the synchronized data can vary across different physical devices, encompassing environmental data, maintenance data, and more. We consider a discrete-time system in which the whole time horizon is divided into equal time slots denoted as $\mathcal{T}\triangleq\{1,2,\ldots,T\}$. Each time slot has a fixed duration of $T_s$. 
Let $b_{k,m}^t$ represent the digital twin deployment indicator, which is a binary variable, and $b_{k,m}^t=1$ if the digital twin of device~$k$ is deployed in server~$m$ at time slot~$t$, and $b_{k,m}^t=0$ otherwise. Besides, in each time slot, each device has a dedicated digital twin that is deployed exclusively on one specific server, i.e., $\sum_{m=1}^M b_{k,m}^t = 1, k\in\mathcal{K}, t\in\mathcal{T}.$    
In each time slot, the devices are scheduled to synchronize their data to the edge servers where their digital twins are deployed.
Let $h_{k,m}^t$ denote the channel power gain between IoT device~$k$ and edge server~$m$ in time slot~$t$, which includes both large-scale fading and small-scale fading.

\subsubsection{Edge association for synchronization}

We assume that for each edge server, at most one device is scheduled to transmit to this specific edge server at one time slot, and at most $M$ devices are scheduled to transmit to $M$ edge servers simultaneously in one time slot. When a device is scheduled to synchronize in one time slot, it will sample the latest data and transmit it to its associated edge server.
Let $a_{k,m}^t$ be a binary variable for edge association and $a_{k,m}^t=1$ if device~$k$ is scheduled to transmit data to edge server~$m$ in time slot~$t$, and $a_{k,m}^t=0$ otherwise, where $t\in\mathcal{T}$.
The transmit data size of device~$k$ is denoted by $D_k$. 
If device~$k$ is scheduled to transmit to server~$m$ in time slot~$t$, let the transmit rate be denoted by $R_{k,m}^t$, and
\begin{equation}\label{rate}
    R_{k,m}^{t} = B \log (1+\frac{{h_{k,m}^{t}} p_{k,m}^{t}}{\sigma^2}),
\end{equation}
where $B$ is the channel bandwidth, $p_{k,m}^{t}$ is the transmit power of device~$k$ to server~$m$ in time slot~$t$, and $\sigma^2$ is the noise power. Moreover, when a device is scheduled to transmit in time slot~$t$, it is required to complete its transmission within the time slot, i.e., 
\begin{equation}\label{time}
    \frac{D_k}{R_{k,m}^{t}}\leq T_s, \;
\end{equation}
where the transmit power can be adjusted to satisfy the time constraint.
Define $E_k^{t}$ as the transmit energy of device~$k$ in time slot~$t$, which is quoted below:
\begin{equation}\label{e_k_t}
     E_k^{t} = 
    \sum_{m=1}^M a_{k,m}^t \frac{D_k p_{k,m}^{t}}{ R_{k,m}^{t}}.
\end{equation}

\subsubsection{Backhaul transmission}
When the associated server~$m'$ of device~$k$ is different from its digital twin deployment server~$m$, an additional transmission is needed. In this case, apart from the transmission between device~$k$ and server~$m'$, server $m'$ is also responsible for transmitting the data of device $k$ to server $m$. This transmission between the two edge servers incurs an additional energy cost of $\eta$ per bit. The extra delay cost for the backhaul communication is negligible compared to the scheduling delay. With the above-mentioned definitions, the backhaul energy cost for time slot~$t$ is 
\begin{equation}
    E_{\text{back}}^{t} = \sum_{m=1}^M\sum_{k=1}^K a_{k,m}^{t}(a_{k,m}^{t}-b_{k,m}^{t})\eta D_k.
\end{equation}
\begin{figure}
    \centering
    \includegraphics[width=0.41\textwidth]{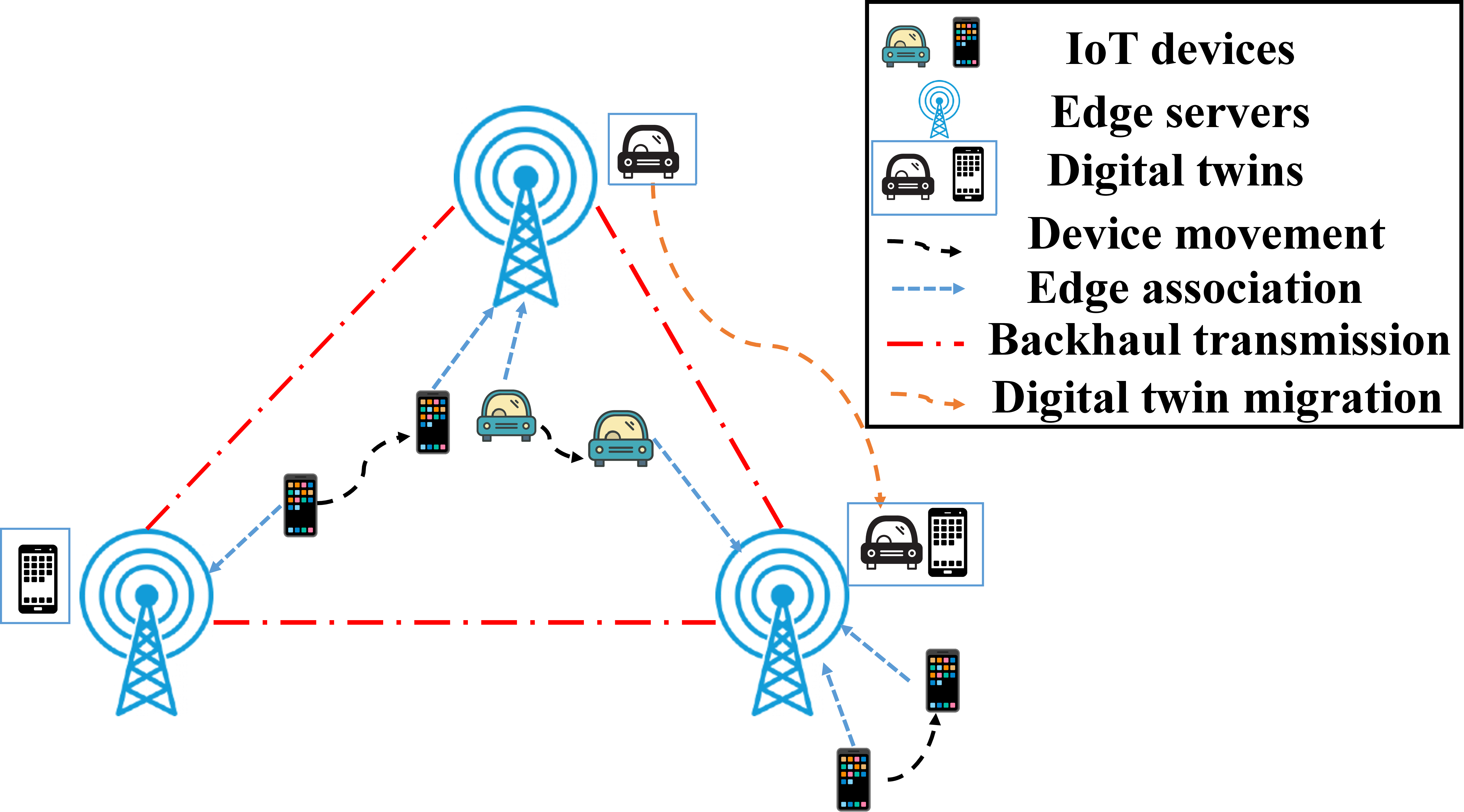}
    \caption{An illustration of digital twin edge networks.}
    \label{system}
\end{figure}
\subsubsection{Digital twin migration}
To eliminate backhaul costs, we consider the digital twins of devices may migrate among different servers in different time slots to serve the synchronization of the devices. Note that there will be extra energy costs due to the digital twin migration when the deployment of digital twins of the devices in the current time slot differs from the previous one. Define the energy cost per bit for each digital twin migration as $\lambda$. Let digital twin bit size for device~$k$ be denoted by $\tilde{D}_k$. 
The total energy cost for digital twin migration in time slot~$t$ is
\begin{equation}
    E_{\text{mig}}^{t} = \frac{1}{2} \sum_{k=1}^K \sum_{m=1}^M \left|b_{k,m}^{t}-b_{k,m}^{t-1}\right|\lambda \tilde{D}_k.
\end{equation}

\subsection{Age of Information}
In this paper, we utilize AoI to evaluate the fidelity of the digital twin service. The AoI for the digital twin of device~$k$ is defined as the time elapsed since its most recent stored data was generated. We evaluate the AoI in terms of the number of time slots elapsed. Let $\Delta_k(t)$ represent the AoI of digital twin of device~$k$ in time slot~$t$. We initialize $\Delta_k(1)=1$ for $k\in\mathcal{K}$. The AoI with regard to digital twin is determined by both the AoI of the previous time slot and whether synchronization occurred in that slot. 
If there is no synchronization in the previous slot, this condition can be expressed as $\sum_{m=1}^M a_{k,m}^{t-1} = 0$. In this case, the AoI of digital twin increases by one when the slot elapses by one, resulting in the following equation:
\begin{equation}\label{aoi1}
    \Delta_k(t) = \Delta_k(t-1) + 1,\; \text{when }\; \sum_{m=1}^M a_{k,m}^{t-1} = 0.
\end{equation}
If the synchronization occurs in the last slot, it can be represented as $\sum_{m=1}^M a_{k,m}^{t-1} = 1$. 
In that case, the AoI is reset to 1, i.e.,
\begin{equation}\label{aoi2}
 \Delta_k(t) = 1, \; \text{when }\; \sum_{m=1}^M a_{k,m}^{t-1} = 1.
\end{equation}
We combine~\eqref{aoi1} and \eqref{aoi2} as follows:
\begin{equation}\label{aoi}
    \Delta_k(t) =
     (1-\sum_{m=1}^M a_{k,m}^{t-1})\Delta_k(t-1) + 1.
\end{equation}
Further, to keep the fidelity of the digital twin edge networks, the AoI of each digital twin should not exceed the maximum AoI $\Gamma$, i.e.,
\begin{equation}\label{aoi_gamma}
    \Delta_k(t) \leq \Gamma, k\in\mathcal{K}, t\in\mathcal{T}.
\end{equation}

\subsection{Problem Formulation}
We aim to jointly optimize edge association, digital twin deployment, and power allocation to minimize the weighted sum of the long-term average AoI and energy, which are denoted by $\overline{\Delta}$ and $\overline{E}$, respectively. With the definitions, we have
\begin{equation}
    \overline{\Delta} \triangleq \frac{1}{KT}\sum_{t=1}^T\sum_{k=1}^K \Delta_k(t),
\end{equation}
and 
\begin{equation}
    \overline{E}\triangleq\frac{1}{KT}\sum_{t=1}^T(E_{\text{back}}^{t}+E_{\text{mig}}^{t}+ \sum_{k=1}^K E_{k}^{t}).
\end{equation}
Let $\bm{a}^{t} = (a_{k,m}^{t})_{k\in\mathcal{K},m\in\mathcal{M}}$ be the association relationship between all devices and servers in time slot~$t$, and let $\bm{a} = (\bm{a}^{t})_{t\in\mathcal{T}}$ be the association relationship between all devices and servers throughout the entire time horizon. Similarly, let $\bm{b}^{t} = (b_{k,m}^{t})_{k\in\mathcal{K},m\in\mathcal{M}}$ and $\bm{b} = (\bm{b}^{t})_{t\in\mathcal{T}}$ denote the digital twin deployment relationship between all devices and servers in time slot~$t$ and throughout the entire time horizon, respectively. Let $\bm{p}^t = (p_{k,m}^t)_{k\in\mathcal{K}, m\in\mathcal{M}}$ and $\bm{p} = (\bm{p}^t)_{t\in\mathcal{T}}$ be the transmit power of device~$k$ in time slot~$t$ and throughout the entire time horizon, respectively. The optimization problem, denoted as \textbf{P1}, can be mathematically formulated as follows:
\begin{equation*}
   \textbf{P1} \; :    \min_{\bm{a}, \bm{b},\bm{p}}\; \xi \overline{\Delta} + (1-\xi) \overline{E}
\end{equation*}
subject to 
\begin{equation*}\label{const}
    \begin{split}
        C1:& \; a_{k,m}^{t}, b_{k,m}^{t} \in \{0,1\},  k \in \mathcal{K}, m \in \mathcal{M}, t \in \mathcal{T}\\  
        C2:& \; \sum_{k=1}^K a_{k,m}^{t} \leq 1,  m\in\mathcal{M},  t \in \mathcal{T}, \\
       C3 : &\; \sum_{m=1}^M b_{k,m}^t = 1, k\in\mathcal{K}, t\in\mathcal{T}, \\
        C4:& \; \eqref{time} \; \text{and} \;  \eqref{aoi_gamma},
    \end{split}
\end{equation*}
where $0\leq \xi \leq 1$ depicts the system requirements on delay and energy. The consideration of minimizing the weighted sum of AoI and energy aims to provide a comprehensive formulation that addresses diverse system requirements. Specifically, when $\xi=0$, the system is energy-limited and can tolerate a higher AoI. Conversely, when $\xi=1$, it signifies that the system has strict requirements on AoI and prioritizes minimizing AoI over energy consumption.
$C1$ represents the binary constraint on the association relationship variables and the digital twin deployment variables. $C2$ restricts that during each time slot, at most one device can be scheduled to transmit to each server. $C3$ indicates that each device has one digital twin deployed on a specific server. $C4$ represents the delay constraint on the transmission time for each scheduling and the constraint on the maximum AoI. {Before closing this section, it is important to note that in practical systems, the units and scales of AoI and energy may differ. To tackle this challenge, normalizing both AoI and energy by their average values can ensure consistency and comparability.}

\ifCLASSOPTIONcaptionsoff
  \newpage
\fi

\section{Algorithm Design}
Before investigating the developed algorithm, we first analyze the feasibility condition for problem~\textbf{P1}. 
\begin{cor}\label{cor1}
    \textbf{P1} is feasible if and only if $K\leq M\Gamma$.
\end{cor}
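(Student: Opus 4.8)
The plan is to prove the two directions of the biconditional separately, treating $K \le M\Gamma$ as a counting bound on how many devices a fixed set of $M$ servers can keep "fresh" over a window of $\Gamma$ slots.

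First I would argue necessity ($\Rightarrow$): suppose \textbf{P1} is feasible, so there exists a schedule $\bm{a}$ satisfying $C1$--$C4$, in particular constraint $C2$ and the maximum-AoI constraint~\eqref{aoi_gamma}. By~\eqref{aoi} together with~\eqref{aoi_gamma}, each device~$k$ must be scheduled (i.e.\ $\sum_{m=1}^M a_{k,m}^{t}=1$) at least once in every window of $\Gamma$ consecutive slots; otherwise $\Delta_k$ would increment past $\Gamma$. Fixing any such window of length $\Gamma$, summing the per-slot scheduling capacity: constraint $C2$ gives at most $M$ distinct devices served per slot, hence at most $M\Gamma$ device-scheduling events in the window, and since each of the $K$ devices needs at least one such event, $K \le M\Gamma$.

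Next I would argue sufficiency ($\Leftarrow$): given $K \le M\Gamma$, I would exhibit an explicit feasible schedule. Partition (or rather, round-robin) the $K$ devices into $\Gamma$ groups, each of size at most $M$, and in slot $t$ serve the devices of group $((t-1) \bmod \Gamma)+1$ by assigning them arbitrarily and injectively to the $M$ servers (possible since each group has $\le M$ devices), so $C1$ and $C2$ hold. Each device is then served once every $\Gamma$ slots, so by~\eqref{aoi} its AoI never exceeds $\Gamma$, giving~\eqref{aoi_gamma}; one should note the transient at the start, but since $\Delta_k(1)=1$ and the first service occurs within the first $\Gamma$ slots, the bound holds throughout. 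For digital twin deployment, simply set $b_{k,m}^t = a_{k,m}^t$ whenever device~$k$ is scheduled, and leave $b$ unchanged otherwise (each row still sums to one), so $C3$ holds. Finally $C4$'s delay constraint~\eqref{time} is met by the remark after~\eqref{time} that the transmit power can be scaled up to meet $D_k/R_{k,m}^t \le T_s$ — there is no power cap in \textbf{P1} — so the constructed schedule is feasible.

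The main obstacle, such as it is, is being careful about the boundary/transient behaviour of the AoI recursion at the first few slots and at the window edges — ensuring that "served at least once every $\Gamma$ slots" is exactly equivalent to $\Delta_k(t)\le\Gamma$ for all $t$ given the initialization $\Delta_k(1)=1$ — and making sure the round-robin construction genuinely packs $K$ devices into $\Gamma$ slots of width $M$ when $K\le M\Gamma$ (a pigeonhole/greedy packing argument). Everything else is bookkeeping against the constraints $C1$--$C4$, and crucially exploits that power is unconstrained from above so the rate constraint~\eqref{time} is never the binding obstacle to feasibility.
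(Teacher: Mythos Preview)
Your proof is correct and follows essentially the same approach as the paper: a counting argument over a window of $\Gamma$ slots for necessity, and an explicit periodic round-robin schedule for sufficiency. Your version is in fact more thorough---you explicitly verify $C3$, the delay constraint~\eqref{time} via unbounded power, and the AoI transient from $\Delta_k(1)=1$, all of which the paper's proof leaves implicit---but the underlying idea is identical.
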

\begin{proof}
    Following the maximum AoI constraint \eqref{aoi_gamma}, each device should be scheduled at least once during every $\Gamma$ time slots. To realize this, $K$ should be no greater than $M\Gamma$ since at most $M$ devices are scheduled during each time slot. Otherwise, at least one device will not be scheduled in $\Gamma$ time slots, and its AoI will be larger than $\Gamma$. When $K\leq M\Gamma$, one feasible solution is to schedule $M$ devices sequentially in each time slot, ensuring that all devices are scheduled at least once within $\Gamma$ time slots. This scheduling procedure is repeated in the following $\Gamma$ time slots.
\end{proof}

However, the general optimization problem $\textbf{P1}$ is challenging to solve due to the randomness of the channel gains and the coupling of the optimization variables. To tackle these challenges, we first analyze the optimization problem under the static channels, where DT migration is not required. Afterward, we investigate the optimization problem in one time slot, considering both the frequent digital twin migration and fixed digital twin deployment scenarios. Based on the analysis, we develop an online lightweight algorithm for the general problem to strike a balance between frequent digital twin migration and fixed digital twin placement.

\subsection{An Optimal Algorithm for the Special Case with $K=M\Gamma$ and a Static Channel}
We define the scheduling policy of $\Gamma$ time slots as the association relationship of these $\Gamma$ time slots, i.e., $\bm{\pi} \triangleq (\bm{a}^1, \ldots, \bm{a}^\Gamma)$.
\begin{cor}\label{cor2}
    When $K=M\Gamma$, the devices should be scheduled following one specific scheduling policy $\bm{\pi} $ every $\Gamma$ time slots, while satisfying $\sum_{k=1}^K a_{k,m}^t = M$ and $\sum_{t=1}^\Gamma a_{k,m}^t = 1$.  
\end{cor}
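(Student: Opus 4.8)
The plan is to show that the equality case $K=M\Gamma$ of Corollary~\ref{cor1} forces an extremely rigid combinatorial structure, and then to use the static channel of this subsection to pin down the per-slot assignment. First I would introduce, for each slot $t$, the set $S_t \triangleq \{k\in\mathcal{K}: \sum_{m=1}^M a_{k,m}^t = 1\}$ of devices scheduled in slot $t$. Constraint $C2$ gives $|S_t|\le M$, and, exactly as in the proof of Corollary~\ref{cor1}, the AoI cap \eqref{aoi_gamma} forces every device to be scheduled at least once in every block of $\Gamma$ consecutive slots (the recursion \eqref{aoi} with $\Delta_k(1)=1$ makes the first such block start at $t=1$). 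Counting schedulings over any block $W$ of $\Gamma$ consecutive slots: on one hand $\sum_{t\in W}|S_t|\le M\Gamma$, and on the other hand the ``at least once'' property gives $\sum_{t\in W}|S_t|\ge |\bigcup_{t\in W}S_t| = K = M\Gamma$. Hence both inequalities are tight, which simultaneously yields $|S_t|=M$ for every $t$ (every server is used in every slot) and that the family $\{S_t\}_{t\in W}$ is a partition of $\mathcal{K}$ (each device scheduled exactly once per $\Gamma$-block). Applied to $W=\{1,\dots,\Gamma\}$, this is precisely the pair of conditions in the statement.

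Next I would establish periodicity of the scheduled sets. Comparing the overlapping blocks $W_\tau=\{\tau,\dots,\tau+\Gamma-1\}$ and $W_{\tau+1}=\{\tau+1,\dots,\tau+\Gamma\}$, both $\{S_t\}_{t\in W_\tau}$ and $\{S_t\}_{t\in W_{\tau+1}}$ partition $\mathcal{K}$ and share the blocks $S_{\tau+1},\dots,S_{\tau+\Gamma-1}$; cancelling these common blocks forces $S_\tau = S_{\tau+\Gamma}$. By induction on $\tau$, $S_t = S_{t+\Gamma}$ for all $t$, so the set of devices that must be served repeats with period $\Gamma$.

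It remains to upgrade periodicity of the scheduled sets to periodicity of the full association $\bm{a}^t$, i.e. that the same device is sent to the same server across periods. Here I would invoke the static channel assumption $h_{k,m}^t\equiv h_{k,m}$; since no migration is needed, $\bm{b}$ is fixed. Then $\overline{\Delta}$ depends only on the (already fixed) sets $\{S_t\}$, and the total energy decomposes as a sum over the scheduled device--server pairs of per-pair costs depending only on $(k,m)$ (through \eqref{e_k_t} with the minimal feasible power obtained from \eqref{rate} and \eqref{time}, plus the fixed backhaul term). Consequently the cost contributed by slot $t$ is minimized by a minimum-weight perfect matching between $S_t$ and $\mathcal{M}$; since $S_t=S_{t+\Gamma}$, slots $t$ and $t+\Gamma$ pose the identical matching problem, so an optimal schedule may be taken with $\bm{a}^t=\bm{a}^{t+\Gamma}$, i.e. governed by a single policy $\bm{\pi}=(\bm{a}^1,\dots,\bm{a}^\Gamma)$.

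The main obstacle I anticipate is this last step rather than the counting: one must be careful that ``scheduled following one specific policy'' is an optimality statement (not valid for every feasible schedule), and to handle cleanly the boundary blocks near $t=1$ and $t=T$ when the horizon does not divide evenly into $\Gamma$-blocks; for the final partial block I would extend the partition and the cancellation argument of the second paragraph by the same device-counting bookkeeping.
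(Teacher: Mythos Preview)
Your proposal is correct and follows essentially the same two-step structure as the paper: a counting argument to force exactly $M$ devices scheduled per slot and each device exactly once per $\Gamma$-block, followed by a periodicity argument. The paper phrases periodicity directly via the AoI recursion (the devices scheduled in slot $1$ reach AoI $\Gamma$ at slot $\Gamma+1$ and hence must be rescheduled there), which is equivalent to your overlapping-windows cancellation.

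You go further than the paper on one point. The paper's argument, as written, only establishes that the \emph{set} $S_t$ of scheduled devices is $\Gamma$-periodic, and then simply asserts that the full policy $\bm{\pi}=(\bm{a}^1,\dots,\bm{a}^\Gamma)$ repeats; it never justifies why the device-to-server pairing should also recur. You correctly identify this gap, note that it is an optimality (not feasibility) claim, and close it via the static-channel assumption and the per-slot minimum-weight matching decomposition. This is a genuine tightening of the paper's reasoning rather than a different route; your caveats about boundary blocks when $\Gamma\nmid T$ are also well placed, though the paper does not address them.
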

\begin{proof}
    As stated in the proof of Corollary~\ref{cor1}, each device should be scheduled at least once per $\Gamma$ time slots. Since $K=M\Gamma$, to ensure the above requirement, each device should be scheduled exactly once every $\Gamma$ time slots, and in each time slot, $M$ devices should be scheduled, i.e., $\sum_{k=1}^K a_{k,m}^t = M$ and $\sum_{t=1}^\Gamma a_{k,m}^t = 1$. Assume that during the first $\Gamma$ time slots, the devices are scheduled according to policy~$\bm{\pi}$. At time slot~$\Gamma+1$, the devices scheduled at time slot one have AoI values of $\Gamma$, indicating that they should be scheduled at time slot~$\Gamma+1$. Following the same procedure, the devices should be scheduled during the second $\Gamma$ time slots following the same scheduling policy~$\bm{\pi} $. Similarly, the devices should be scheduled following one specific scheduling policy $\bm{\pi} $ every $\Gamma$ time slots.
\end{proof}
Referring to Corollary~\ref{cor2}, to optimize the problem under $K=M\Gamma$, we only need to determine the scheduling policy $\bm{\pi}$. Given static channels where the channel gains of all users remain constant across all time slots, digital twin migration is deemed unnecessary for minimizing overall energy consumption. Besides, digital twin initialization can be optimized to eliminate the cost of backhaul synchronization by letting the digital twin placement server of each device be the same as the associated server, i.e., $b_{k,m}^t = a_{k,m}^t$. Thus, the optimization problem~\textbf{P1} in $T$ time slots can be simplified as the scheduling and power allocation problem in the first $\Gamma$ time slots, assuming $K=M\Gamma$ and static channel. This simplified problem can be mathematically formulated as~\textbf{P2}:
\begin{equation*}
   \textbf{P2} \; :    \min_{\bm{\pi},\bm{p}}\; \sum_{t=1}^{\Gamma} \sum_{k=1}^K\xi \Delta_k^t + (1-\xi) E_k^t, 
\end{equation*}
subject to 
\begin{equation*}
    C1, C2, \eqref{time}, \eqref{aoi_gamma}, \text{and} \; \sum_{m=1}^M a_{k,m}^t =1, k\in{\mathcal{K}}, t\in\{1,\ldots, \Gamma\}.
\end{equation*}

\begin{cor}\label{cor3}
    For any scheduling policy~$\bm{\pi}$, the total AoIs for all devices across the $\Gamma$ time slots, denoted by $\Delta_{\text{sum}}$, remains constant and is given by $\Delta_{\text{sum}}= \frac{M(2\Gamma^3+3\Gamma^2+\Gamma)}{6}$. 
\end{cor}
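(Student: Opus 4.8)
The plan is to use Corollary~\ref{cor2} to pin down the combinatorial structure of any admissible $\bm{\pi}$, then to write down each device's AoI trajectory in closed form, and finally to sum and observe that the policy-dependent terms cancel out. By Corollary~\ref{cor2}, under $K=M\Gamma$ every device~$k$ is scheduled in exactly one slot of the cycle $\{1,\dots,\Gamma\}$; call it $\tau_k$, and moreover for each slot $\tau\in\{1,\dots,\Gamma\}$ exactly $M$ devices satisfy $\tau_k=\tau$. So a scheduling policy $\bm{\pi}$ is, up to the (AoI-irrelevant) matching of the $M$ fired devices to the $M$ servers in each slot, nothing more than a partition of the $K$ devices into $\Gamma$ balanced groups of size $M$ indexed by their firing slot.

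Next I would evaluate $\Delta_k(t)$ for $t\in\{1,\dots,\Gamma\}$ directly from the recursion~\eqref{aoi} together with the initialization $\Delta_k(1)=1$. Since device~$k$ is not scheduled in any slot before $\tau_k$, the AoI simply increments, giving $\Delta_k(t)=t$ for $t\le\tau_k$; it is reset at slot $\tau_k+1$ because $\sum_m a_{k,m}^{\tau_k}=1$, and then increments again through the end of the cycle (the next firing is only at slot $\tau_k+\Gamma$), giving $\Delta_k(t)=t-\tau_k$ for $\tau_k<t\le\Gamma$. Summing over the cycle yields $\sum_{t=1}^{\Gamma}\Delta_k(t)=\tfrac{\tau_k(\tau_k+1)}{2}+\tfrac{(\Gamma-\tau_k)(\Gamma-\tau_k+1)}{2}=\tfrac{\Gamma(\Gamma+1)}{2}-\tau_k(\Gamma-\tau_k)$.

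Then I would sum over all devices. The per-device contribution depends on $\tau_k$, but grouping the devices by their firing slot and using the fact that each slot carries exactly $M$ of them turns $\sum_{k}\tau_k(\Gamma-\tau_k)$ into $M\sum_{\tau=1}^{\Gamma}\tau(\Gamma-\tau)$, which is independent of $\bm{\pi}$. Hence $\Delta_{\text{sum}}=M\Gamma\cdot\tfrac{\Gamma(\Gamma+1)}{2}-M\sum_{\tau=1}^{\Gamma}\tau(\Gamma-\tau)$, and substituting $\sum_{\tau=1}^{\Gamma}\tau=\tfrac{\Gamma(\Gamma+1)}{2}$ and $\sum_{\tau=1}^{\Gamma}\tau^2=\tfrac{\Gamma(\Gamma+1)(2\Gamma+1)}{6}$ gives $\sum_{\tau}\tau(\Gamma-\tau)=\tfrac{\Gamma(\Gamma+1)(\Gamma-1)}{6}$ and, after simplification, $\Delta_{\text{sum}}=\tfrac{M\Gamma(\Gamma+1)(2\Gamma+1)}{6}=\tfrac{M(2\Gamma^3+3\Gamma^2+\Gamma)}{6}$, as claimed.

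There is no deep obstacle here; the argument is essentially careful bookkeeping, so in the writeup I would keep it short. The two points that deserve care are: (i) one must use the initialization $\Delta_k(1)=1$ rather than a stationary value, since the first $\Gamma$-slot block is not the steady-state regime and the ``initial ramp'' slots $1,\dots,\tau_k$ must be accounted for separately from the post-reset slots $\tau_k+1,\dots,\Gamma$; and (ii) one should note explicitly that the individual sum $\sum_t\Delta_k(t)$ genuinely varies with $\tau_k$, so the asserted invariance is a property of the aggregate only, enabled precisely by the balanced ``$M$ devices per slot'' structure from Corollary~\ref{cor2}. Checking the boundary cases $\tau_k=1$ and $\tau_k=\Gamma$ against the closed form for $\sum_t\Delta_k(t)$ provides a quick sanity check.
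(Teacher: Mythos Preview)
Your proposal is correct and follows essentially the same approach as the paper: both invoke Corollary~\ref{cor2} to fix the combinatorial structure, compute the per-device AoI sum over the $\Gamma$ slots as a function of that device's unique firing slot (your $\frac{\Gamma(\Gamma+1)}{2}-\tau_k(\Gamma-\tau_k)$ is algebraically identical to the paper's $t'^2-\Gamma t'+\frac{\Gamma^2+\Gamma}{2}$), and then use the balanced $M$-per-slot structure to reduce the total to $M\sum_{\tau=1}^{\Gamma}(\cdot)$, which evaluates to the stated closed form. Your additional remarks on the initialization and on the invariance being an aggregate (not per-device) property are well taken but go slightly beyond what the paper's terse proof spells out.
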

\begin{proof}
    When device~$k$ is scheduled in time slot~$t'$ during the $\Gamma$ time slots, the sum of AoIs in all $\Gamma$ time slots with regard to device~$k$, denoted as $\Delta_{\text{sum},k} $, is
    \begin{equation}
        \Delta_{\text{sum},k}  = \sum_{t=1}^{\Gamma} \Delta_k^t  =\sum_{i=1}^{t'} i + \sum_{j=1}^{\Gamma-t'} j = t'^2 - \Gamma t' + \frac{\Gamma^2}{2} + \frac{\Gamma}{2}.
    \end{equation}
    By Corollary~\ref{cor2}, each device is scheduled exactly once in $\Gamma$ time slots. Besides, in each time slot, there are $M$ devices scheduled. Therefore, we can derive 
    \begin{equation}
    \begin{split}
        \Delta_{\text{sum}} &
 = \sum_{k=1}^K \Delta_{\text{sum},k}= M\sum_{t'=1}^\Gamma (t'^2 - \Gamma t' + \frac{\Gamma^2}{2} + \frac{\Gamma}{2}) \\& = \frac{1}{6}{M(2\Gamma^3+3\Gamma^2+\Gamma)}.\qedhere 
    \end{split}  
    \end{equation}
    \end{proof}

In accordance with Corollary~\ref{cor3}, problem \textbf{P2} degenerates to the energy minimization problem, which can be represented as a bipartite graph~$G = (U, V, E)$, where $U=\mathcal{K}$ represents the set of $K$ users, $V$ represents the set for $M\Gamma$ edge servers, and $E$ represents the set of the edges connecting each vertex in $U$ and each vertex in $V$. Since we consider the scheduling problem in $\Gamma$ time slots, the set $V$ is generated by duplicating the $M$ edge servers $\Gamma$ times. Besides, let each vertex~$v_{t,m}\in V$ be labelled by $(t,m), t\in\{1,2,\ldots, \Gamma\}, m\in\mathcal{M}$. The edge weight between $u\in{U}$ and $v_{t,m}\in V$ is defined as
\begin{equation}
    w(e_{u,v_{t,m}})=E_{u,m}^t,
\end{equation}
$E_{u,m}^t$ represents the minimum energy for device~$u$ transmitting data to server~$m$ in time slot~$t$, which can be determined by Corollary~\ref{cor_power}. The value of $E_{u,m}^t$ remains constant across different time slots under static channels.
\begin{cor}\label{cor_power}
    If device~$k$ is associated with server~$m$ at time slot~$t$, the optimal power to minimize $E_{k,m}^t$ is 
    \begin{equation}\label{p_opt}
        p_{k,m}^{t} = \frac{\sigma^2(2^{D_k/(BT_s)}-1)}{h_{k,m}^t}.
    \end{equation}
\end{cor}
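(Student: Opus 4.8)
The plan is to regard $E_{k,m}^t$ as a function of the single variable $p \triangleq p_{k,m}^t$ (everything else being fixed once $a_{k,m}^t=1$), to read off its feasible domain from the delay constraint, and then to show the function is strictly increasing so that the optimum sits at the left endpoint of that domain. First I would substitute \eqref{rate} into \eqref{e_k_t} to obtain, with $c \triangleq h_{k,m}^t/\sigma^2 > 0$ and $\log$ taken to base $2$,
\begin{equation*}
    E_{k,m}^t = \frac{D_k\, p}{B\log(1+cp)} \triangleq D_k f(p).
\end{equation*}
Then I would rewrite \eqref{time}: since $R_{k,m}^t = B\log(1+cp)$ is strictly increasing in $p$, the constraint $D_k/R_{k,m}^t \le T_s$ is equivalent to $B\log(1+cp) \ge D_k/T_s$, i.e.\ to $p \ge \sigma^2\big(2^{D_k/(BT_s)}-1\big)/h_{k,m}^t$. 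Hence the value in \eqref{p_opt} is precisely the smallest power for which the transmission completes within the slot.

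It then remains to show that $f$ is strictly increasing on $p>0$, which makes the smallest feasible power also the energy-minimizing one (note that the AoI terms and the backhaul term contributing to the objective of \textbf{P2} do not depend on $p$, and \textbf{P2} imposes no peak-power cap). I would compute
\begin{equation*}
    f'(p) = \frac{g(p)}{\big(\log(1+cp)\big)^2}, \qquad g(p) \triangleq \log(1+cp) - \frac{cp}{(1+cp)\ln 2},
\end{equation*}
and check the sign of $g$. Writing $x \triangleq cp$ and factoring out $1/\ln 2 > 0$, the sign of $g$ is that of $\ln(1+x) - x/(1+x)$, which vanishes at $x=0$ and has derivative $\tfrac{1}{1+x}-\tfrac{1}{(1+x)^2} = \tfrac{x}{(1+x)^2} > 0$ for $x>0$; therefore $g(p)>0$ and $f'(p)>0$ for every $p>0$. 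Consequently $E_{k,m}^t = D_k f(p)$ is strictly increasing, so its minimum over the feasible set $\{p \ge \sigma^2(2^{D_k/(BT_s)}-1)/h_{k,m}^t\}$ is attained at the left endpoint, which is exactly \eqref{p_opt}.

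Everything here is elementary; the one mildly delicate step is the monotonicity of $f$ (equivalently, the positivity of $g$), and even that is just the textbook inequality $\ln(1+x) > x/(1+x)$ for $x>0$. I would also remark that the optimal power is strictly positive since $D_k>0$, so the expression is well defined whenever $(k,m)$ is a legitimate association, and that, although the present corollary concerns only the minimization of $E_{k,m}^t$, the same argument will later justify taking the edge weight $w(e_{u,v_{t,m}})=E_{u,m}^t$ to be this minimum energy.
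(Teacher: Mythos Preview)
Your proof is correct and follows essentially the same approach as the paper: both write $E_{k,m}^t$ as $D_k p/\big(B\log(1+cp)\big)$, show it is monotonically increasing in $p$ by differentiation, and then read off the minimal feasible power from the delay constraint~\eqref{time}. Your treatment of the monotonicity step is in fact a bit cleaner, since you reduce the sign question to the standard inequality $\ln(1+x)>x/(1+x)$ rather than leaving the positivity of the derivative as a bare assertion.
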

\begin{proof}
    Based on \eqref{e_k_t}, we have 
    \begin{equation}
        E_{k,m}^t = \frac{D_kp_{k,m}^t}{B\log(1+{h_{k,m}^tp_{k,m}^t}/{\sigma^2})}.
    \end{equation}
    By taking the derivative with respect to $p_{k,m}$, we obtain
    \begin{equation}
    \begin{split}
    &\frac{d}{dp_{k,m}^t}E_{k,m}^t = \\
&\frac{\ln{(2)}D_k\left((h_{k,m}^tp_{k,m}^t+\sigma^2)\ln{(\frac{h_{k,m}^tp_{k,m}^t}{\sigma^2}+1)} - h_{k,m}^tp_{k,m}^t\right)}{B(h_{k,m}^tp_{k,m}^t+\sigma^2)\ln{(\frac{h_{k,m}^tp_{k,m}^t}{\sigma^2}+1})},
        \end{split}
    \end{equation}
which is greater than zero. Thus, $E_{k,m}^t$ grows monotonically with $p_{k,m}^t$. Besides, to satisfy \eqref{time}, we have
    \begin{equation}
        p_{k,m}^t \geq \frac{\sigma^2(2^{D_k/(BT_s)}-1)}{h_{k,m}^t}.
    \end{equation}
    Therefore, the optimal power is given by \eqref{p_opt}.
\end{proof}
\begin{theorem}
   Problem \textbf{P2} can be solved optimally by finding the minimum weight perfect matching~$M$ in graph~$G$, i.e., $w(M)=\sum_{e\in M}w(e)$. 
\end{theorem}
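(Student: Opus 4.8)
The plan is to show that, once the pieces that are either constant or solvable in closed form are stripped away, \textbf{P2} is \emph{exactly} a minimum-weight perfect matching problem on the complete bipartite graph $G$, and then invoke standard matching theory.

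First I would use Corollary~\ref{cor2} to pin down the combinatorial structure of any feasible association for \textbf{P2}: each device is scheduled exactly once over the $\Gamma$ slots, and (combining $\sum_{k} a_{k,m}^t = M$ with $C2$) exactly one device is scheduled to each server in each slot. Hence the map sending a feasible $\bm{\pi}$ to the set of pairs $\{(u,v_{t,m}) : a_{u,m}^t = 1\}$ is a bijection between feasible associations of \textbf{P2} and perfect matchings of $G$ (note $|U| = |V| = M\Gamma$, and $G$ is complete bipartite, so perfect matchings exist). Under this map the maximum-AoI constraint \eqref{aoi_gamma} is automatically met, since scheduling every device once per $\Gamma$ slots keeps every $\Delta_k(t)$ at most $\Gamma$.

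Next I would dispose of the power variables. For a fixed association the objective of \textbf{P2} separates over devices: the rate expression \eqref{rate} contains only the noise $\sigma^2$ and no interference term, so the powers $p_{k,m}^t$ do not couple across devices or servers, and each scheduled device's energy is minimized subject to the deadline \eqref{time} by the closed-form power \eqref{p_opt} of Corollary~\ref{cor_power}, which is precisely the edge weight $w(e_{u,v_{t,m}}) = E_{u,m}^t$. Substituting, the value of \textbf{P2} at a feasible point with association $M$ and optimal power equals $\xi\,\Delta_{\text{sum}} + (1-\xi)\sum_{e\in M} w(e)$. By Corollary~\ref{cor3} the term $\xi\,\Delta_{\text{sum}}$ is a constant independent of $\bm{\pi}$, so minimizing the \textbf{P2} objective over $(\bm{\pi},\bm{p})$ is equivalent to minimizing $w(M) = \sum_{e\in M} w(e)$ over perfect matchings $M$ of $G$. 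This gives the claim, and the minimizer is computable in polynomial time, e.g.\ by the Hungarian (Kuhn--Munkres) algorithm on the $M\Gamma \times M\Gamma$ weight matrix.

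The only step that needs genuine care is justifying this clean decoupling: that fixing the association reduces \textbf{P2} to $M\Gamma$ independent single-variable power subproblems, each solved by Corollary~\ref{cor_power}, and simultaneously that the remaining constraints of \textbf{P2} (the transmission deadline \eqref{time} and the AoI bound \eqref{aoi_gamma}) impose nothing beyond what is already absorbed into the edge weights and into the scheduling pattern of Corollary~\ref{cor2}. Once that observation is nailed down, the equivalence with minimum-weight perfect matching is immediate.
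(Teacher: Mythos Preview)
Your proposal is correct and follows essentially the same route as the paper's proof: invoke Corollary~\ref{cor3} to drop the AoI term as a constant, use Corollary~\ref{cor2} to identify feasible schedules with perfect matchings of $G$, and use Corollary~\ref{cor_power} to reduce the per-device energy to the predefined edge weights. The paper's proof is a terse two-sentence version of exactly this argument; your write-up simply makes explicit the decoupling of the power subproblems and the absorption of the constraints \eqref{time} and \eqref{aoi_gamma}, which the paper leaves implicit.
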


\begin{proof}
    By Corollary~\ref{cor3}, problem~\textbf{P2} degenerates to the energy minimization problem. Since $K=M\Gamma$, $M$ users should be scheduled to $M$ servers in each time slot, and the $K$ users should be scheduled exactly once during the $\Gamma$ time slots, which is equivalent to the perfect matching problem in $G$. Besides, by the definition of edge weights, it is clear that a minimum weight perfect matching minimizes the total energy consumption.
\end{proof}

\subsection{Analysis for the Optimization Problem in One Time Slot}
When approaching the optimization problem for a single time slot, the goal is to minimize the weighted sum of AoI and energy. Since $\Delta_k(t)$ is determined by the status of the previous time slot and remains fixed in time slot~$t$, we only need to consider the energy consumption in this time slot. For brevity, we can remove the constant factor $(1-\xi)$ and the one-shot optimization problem is formulated as follows:
\begin{equation*}
   \textbf{P3} \; :    \min_{\bm{a}^t, \bm{b}^t,\bm{p}^t}\;   E_{\text{back}}^{t}+E_{\text{mig}}^{t}+ \sum_{k=1}^K E_{k}^{t},
\end{equation*}
subject to 
\begin{equation*}
    C1, C2, C3, C4,\; \text{and}\;\eqref{time}.
\end{equation*}
\begin{cor}\label{cor4}
    Let $\Tilde{{\mathcal{K}}}_t \triangleq \{k: \Delta_k(t) = \Gamma\}$. The feasibility condition of \textbf{P3} is $|\Tilde{{\mathcal{K}}_t} | \leq M$.
\end{cor}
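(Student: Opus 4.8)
The plan is to prove the two implications of the equivalence separately, relying only on the AoI recursion \eqref{aoi}, the maximum-AoI constraint \eqref{aoi_gamma}, and the scheduling cap $C2$. For the ``only if'' direction I would argue that every device in $\tilde{\mathcal{K}}_t$ is forced to be scheduled in slot~$t$: if $k$ has $\Delta_k(t)=\Gamma$ and $\sum_{m} a_{k,m}^t = 0$, then \eqref{aoi} gives $\Delta_k(t+1) = \Delta_k(t)+1 = \Gamma+1$, which violates \eqref{aoi_gamma} at slot $t+1$. Hence in any feasible solution $\sum_m a_{k,m}^t = 1$ for all $k\in\tilde{\mathcal{K}}_t$. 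Since $C2$ allows at most one scheduled device per server and there are only $M$ servers, the number of simultaneously scheduled devices — in particular $|\tilde{\mathcal{K}}_t|$ — cannot exceed $M$.

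For the ``if'' direction, whenever $|\tilde{\mathcal{K}}_t|\le M$ I would exhibit an explicit feasible point of \textbf{P3}. Fix an injection from $\tilde{\mathcal{K}}_t$ into $\mathcal{M}$, set $a_{k,m}^t=1$ along it (remaining servers may stay idle or serve arbitrary additional devices), choose any deployment $\bm{b}^t$ satisfying $C3$ (for instance $\bm{b}^t=\bm{b}^{t-1}$ when $t>1$), and pick the transmit powers according to Corollary~\ref{cor_power}. Then $C1$ and $C3$ hold by construction, $C2$ by injectivity, and — since \textbf{P3} imposes no maximum-power constraint — the power choice makes \eqref{time} hold. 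Finally \eqref{aoi_gamma} is satisfied at slot $t+1$: every urgent device now has $\Delta_k(t+1)=1\le\Gamma$, and every unscheduled device had $\Delta_k(t)\le\Gamma-1$ (as $k\notin\tilde{\mathcal{K}}_t$), so $\Delta_k(t+1)=\Delta_k(t)+1\le\Gamma$.

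The step I expect to require the most care is making precise which instance of \eqref{aoi_gamma} is the binding constraint. In slot~$t$ the values $\Delta_k(t)$ are not decision variables — they are inherited from the past — so the feasibility of \textbf{P3} is really governed by the slot-$(t+1)$ constraint $\Delta_k(t+1)\le\Gamma$, which the association $\bm{a}^t$ does control. I would therefore state explicitly the standing assumption that \textbf{P3} is entered in a state satisfying $\Delta_k(t)\le\Gamma$ for all $k$ (otherwise it is trivially infeasible for a reason unrelated to the claimed bound), under which the above gives $|\tilde{\mathcal{K}}_t|\le M$ as a clean necessary and sufficient condition. Beyond this bookkeeping the argument is a short counting argument layered on the one-step AoI recursion and needs no real computation.
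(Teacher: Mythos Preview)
Your argument is correct and follows the same counting logic as the paper's proof: devices with $\Delta_k(t)=\Gamma$ must be scheduled in slot~$t$ lest their AoI violate \eqref{aoi_gamma}, and $C2$ caps the number of scheduled devices at $M$. In fact your treatment is more complete than the paper's, which only spells out the necessity direction and leaves sufficiency implicit; your explicit construction of a feasible point and your remark on which instance of \eqref{aoi_gamma} is actually binding are welcome additions.
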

\begin{proof}
 For the devices whose AoI is equal to the maximum AoI constraint, i.e., $k\in\Tilde{\mathcal{K}}_t$, it is necessary to schedule them in time slot $t$. Failing to schedule these devices in the given time slot would result in their AoI exceeding the constraint. Moreover, due to $C2$, the maximum number of devices scheduled in each time slot cannot exceed the number of servers $M$.  
\end{proof}
\begin{cor}\label{p3_cor1}
If device~$k$ satisfies $\Delta_k(t)<\Gamma$, the optimal solution for device~$k$ in \textbf{P3} is $a_{k,m}^t=0, \forall m$, $b_{k,m}^t=b_{k,m}^{t-1}, \forall m$, and $p_k=0$.
\end{cor}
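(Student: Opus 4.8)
The plan is to prove this by a device-wise improvement argument: I will take an arbitrary feasible solution $(\bm{a}^t,\bm{b}^t,\bm{p}^t)$ of \textbf{P3} and show that overwriting device~$k$'s variables with $a_{k,m}^t=0$ for all $m$, $b_{k,m}^t=b_{k,m}^{t-1}$ for all $m$, and $p_k=0$, while leaving all other devices untouched, keeps the solution feasible and does not increase the objective. Since this transformation applies to every feasible point, it produces an optimal solution with the claimed structure; and because each energy term is strictly positive whenever its triggering action is taken (recall $D_k>0$, so by Corollary~\ref{cor_power} any transmission of device~$k$ costs positive power and energy), this structure is in fact necessary at any optimum.

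The key observation is that the objective of \textbf{P3} splits into a sum of per-device costs. Indeed, $\sum_k E_k^t$ is already separable; using $a_{k,m}^t\in\{0,1\}$ one rewrites $E_{\text{back}}^t=\sum_k\sum_m a_{k,m}^t(1-b_{k,m}^t)\eta D_k\ge 0$, which is separable; and $E_{\text{mig}}^t=\tfrac12\sum_k\sum_m|b_{k,m}^t-b_{k,m}^{t-1}|\lambda\tilde{D}_k\ge 0$ is separable as well, the previous placement $\bm{b}^{t-1}$ being a given constant. With the proposed assignment, the three contributions indexed by device~$k$ all vanish: $p_k=0$ together with $a_{k,\cdot}^t=0$ gives $E_k^t=0$; $a_{k,\cdot}^t=0$ annihilates device~$k$'s backhaul term; and $b_{k,m}^t=b_{k,m}^{t-1}$ annihilates its migration term. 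Every other device's cost is unchanged, so the objective weakly decreases.

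It then remains to check feasibility of the modified solution, which is where the hypothesis $\Delta_k(t)<\Gamma$ is essential. Constraints $C1$ and $C3$ hold since $b_{k,\cdot}^{t-1}$ is a valid binary placement with $\sum_m b_{k,m}^{t-1}=1$; $C2$ can only get slacker when device~$k$ is removed from whatever server it occupied, so it cannot push any critical device out of its slot; and the time constraint~\eqref{time} becomes vacuous for device~$k$ once $a_{k,\cdot}^t=0$. The delicate point is $C4$: $\Delta_k(t)$ is determined by slot $t-1$ and hence unaffected by the edit, and by assumption it is already below $\Gamma$, so not scheduling device~$k$ still yields $\Delta_k(t+1)=\Delta_k(t)+1\le\Gamma$ and no downstream infeasibility --- exactly the complement of the situation in Corollary~\ref{cor4}. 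I expect this feasibility bookkeeping, in particular ruling out any hidden single-slot incentive to migrate or transmit for a non-critical device, to be the only real obstacle; once it is pinned down, the improvement argument closes the proof.
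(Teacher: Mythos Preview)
Your improvement argument is correct and formalizes exactly what the paper's own proof asserts as obvious: the paper simply observes that minimizing energy means transmitting only for devices in $\tilde{\mathcal{K}}_t$, leaving the separability and feasibility bookkeeping implicit. Your device-wise exchange argument is the natural way to make that one-line claim rigorous, so the approaches are essentially the same.
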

\begin{proof}
    It is apparent that to minimize energy consumption, it suffices to transmit only the data of devices that require synchronization, i.e., $k\in\Tilde{\mathcal{K}}_t$. Thus, there is no need to transmit the data of the devices that satisfy the AoI constraint.
\end{proof}

By Corollaries~\ref{cor4} and \ref{p3_cor1}, we can reformulate \textbf{P3} by considering only the user pairing and digital twin placement of the devices in $\Tilde{\mathcal{K}}_t$. 
Nonetheless, optimizing the problem remains complex due to the interdependence of the optimization variables $\bm{a}^t$ and $\bm{b}^t$. Therefore, we consider the following two different scenarios given different assumptions on $\bm{b}^t$:
\subsubsection{Scenario 1}
We assume that in each time slot, the digital twin for a given device is deployed at its associated edge server, i.e., $\bm{a}^t = \bm{b}^t$. As a result, no backhaul transmission cost was incurred. The reformulated problem~\textbf{P3.1} is stated as follows:
\begin{equation*}
   \textbf{P3.1} \; :    \min_{\bm{a}^t,\bm{p}^t}\;  \sum_{k\in\Tilde{\mathcal{K}}_t}  \big( 
   \frac{1}{2} \sum_{m=1}^M \left|a_{k,m}^{t}-a_{k,m}^{t-1}\right|\lambda \tilde{D}_k+  E_{k}^{t}\big),       
\end{equation*}
subject to
\begin{equation*}\label{con_3_1}
    C1, C2, C3, C4, \eqref{time}, \text{and} \;C7: \sum_{m=1}^M a_{k,m}^t =1, k\in\Tilde{\mathcal{K}_t}.
\end{equation*}

\textbf{P3.1} can be represented by a bipartite graph~$G'=\{U', V', E'\}$, where $U'=\tilde{K}_t$, $V'$ represents the set of the $M$ edge servers, and $E'$ represents the set of the edges. The edge weight between $u\in U'$ and $v \in V'$ is denoted by
\begin{equation}
    w'(e_{uv}) =(1-b_{u,v}^{t-1})\lambda \tilde{D}_u+  E_{u,v}^{t}, 
\end{equation}
where $E_{u,v}^{t}$ represents the minimum transmit energy between device~$u$ to server~$v$, determined by Corollary~\ref{cor_power}. 
Besides, we add $M-\tilde{K}_t$ dummy devices to $U'$ with the weight of the edge connecting each dummy device and each server in $V'$ set to a significant value. 
\begin{theorem}\label{3_1}
    Problem~\textbf{P3.1} can be optimally resolved by identifying the minimum weight perfect matching $M'$ in graph~$G'$.
\end{theorem}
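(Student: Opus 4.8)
The plan is to separate the continuous power variables from the combinatorial structure, reduce the latter to a balanced bipartite assignment problem, and then identify that assignment problem with a minimum weight perfect matching in $G'$. First I would apply Corollaries~\ref{cor4} and~\ref{p3_cor1} to discard every device $k$ with $\Delta_k(t)<\Gamma$: such a device stays idle ($a_{k,m}^t=0$, $p_k=0$) with its digital twin left where it was, contributing nothing to the objective of \textbf{P3.1}, while every $k\in\tilde{\mathcal{K}}_t$ must be scheduled and, by the feasibility condition $|\tilde{\mathcal{K}}_t|\le M$ of Corollary~\ref{cor4}, there are enough servers to do so. Next, for any fixed association $\bm{a}^t$ the objective is separable over devices in the power variables, and Corollary~\ref{cor_power} shows each $E_k^t$ is minimized by the closed-form power $p_{k,m}^t=\sigma^2(2^{D_k/(BT_s)}-1)/h_{k,m}^t$, attaining the value $E_{k,m}^t$ when $k$ is associated with $m$ and simultaneously certifying the delay constraint~\eqref{time}. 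Hence $\min_{\bm{a}^t,\bm{p}^t}$ in \textbf{P3.1} collapses to a purely combinatorial minimization over the association of the devices in $\tilde{\mathcal{K}}_t$.

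The key algebraic observation is the per-device form of the migration cost. Since $\bm{b}^t=\bm{a}^t$ in Scenario~1 and the twin of device~$k$ occupies exactly one server in slot $t-1$, associating $k$ with server $v$ either leaves the twin in place when $b_{k,v}^{t-1}=1$ (no entry of $b_{k,\cdot}^t-b_{k,\cdot}^{t-1}$ changes) or relocates it when $b_{k,v}^{t-1}=0$ (exactly two entries flip, $1\to0$ and $0\to1$); in either case $\tfrac12\sum_{m=1}^M|a_{k,m}^t-b_{k,m}^{t-1}|\lambda\tilde{D}_k=(1-b_{k,v}^{t-1})\lambda\tilde{D}_k$. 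Adding $E_{k,v}^t$ gives precisely the edge weight $w'(e_{kv})$, so the reduced \textbf{P3.1} is exactly: choose an injective map $\sigma:\tilde{\mathcal{K}}_t\to V'$ (injectivity forced by $C2$, totality by $C7$) that minimizes $\sum_{k\in\tilde{\mathcal{K}}_t} w'(e_{k,\sigma(k)})$; and every such $\sigma$, together with the powers from Corollary~\ref{cor_power}, gives a feasible point of \textbf{P3.1}, since constraint~\eqref{aoi_gamma} holds because all devices in $\tilde{\mathcal{K}}_t$ are reset to AoI~$1$ and the rest already satisfy $\Delta_k(t)<\Gamma$.

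Finally, padding $U'$ with $M-|\tilde{\mathcal{K}}_t|$ dummy devices, each joined to every server by a common (large) weight, makes $G'$ balanced, so every perfect matching of $G'$ uses exactly $M-|\tilde{\mathcal{K}}_t|$ dummy edges and therefore incurs the same fixed dummy cost. Restricting a perfect matching to the real devices yields a feasible injective $\sigma$, and conversely any feasible $\sigma$ extends to a perfect matching of equal cost up to that constant offset; this gives a cost-preserving bijection between perfect matchings of $G'$ and feasible associations of the reduced \textbf{P3.1}, so a minimum weight perfect matching $M'$ yields an optimal association, and combining it with the power rule of Corollary~\ref{cor_power} solves \textbf{P3.1} optimally. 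I expect the only real work to be bookkeeping rather than depth: carefully verifying the per-device migration-cost identity and checking that the matching$\leftrightarrow$solution correspondence is exact in both directions (feasibility of~\eqref{time} and~\eqref{aoi_gamma}, and the constancy of the dummy offset) so that no spurious optimum is introduced.
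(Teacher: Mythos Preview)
Your proposal is correct and follows essentially the same approach as the paper, which gives only a one-line proof (``By the definition of the edge weight in the bipartite graph representation, it is clear that we can obtain the optimal solution for problem \textbf{P3.1} by finding the minimum weight perfect matching''). You simply fill in the details the paper leaves implicit---separating out the power via Corollary~\ref{cor_power}, verifying the per-device migration-cost identity $\tfrac12\sum_m|a_{k,m}^t-b_{k,m}^{t-1}|\lambda\tilde{D}_k=(1-b_{k,v}^{t-1})\lambda\tilde{D}_k$, and checking that the dummy padding is cost-neutral---but the underlying reduction is identical.
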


\begin{proof}
    By the definition of the edge weight in the bipartite graph representation, it is clear that we can obtain the optimal solution for problem \textbf{P3.1} by finding the minimum weight perfect matching.
\end{proof}

\subsubsection{Scenario 2}

To avoid frequent digital twin migration, we presume that the deployment of digital twins remains consistent across various time slots, i.e., $\bm{b}^t = \bm{b}^{t-1}$. Consequently, there is no migration cost involved. This situation enables the reformulation of the problem as \textbf{P3.2}:
\begin{equation*}
    \textbf{P3.2}\; :\min_{\bm{a}^t, \bm{p}^t} \sum_{k\in\Tilde{\mathcal{K}}_t} \big( \sum_{m=1}^M a_{k,m}^{t}(a_{k,m}^{t}-b_{k,m}^{t})\eta D_k+  E_{k}^{t}\big),       
\end{equation*}
where the constraints are the same as that in \textbf{P3.1}. Similar to \textbf{P3.1}, problem \textbf{P3.2} can also be represented by a bipartite graph~$\tilde{G}$. In this graph, the edge weight is given below: 
\begin{equation}
    \tilde{w}(e_{uv}) = (1-b_{u,v}^{t-1})\eta D_u+  E_{u,v}^{t}.
\end{equation}
Likewise, \textbf{P3.2} can be optimally addressed by identifying the minimum weight perfect matching $\tilde{M}$ within $\tilde{G}$.

\subsection{An Online Lightweight Algorithm for the General Case}

In this subsection, we propose an online lightweight scheduling algorithm for the general problem, where digital twin migration is only executed under specific criteria. As shown in \textbf{P3.1}, digital twin migration is conducted in every time slot. This frequent migration, however, can substantially increase the overall system's energy consumption. Conversely, in \textbf{P3.2}, the digital twin deployment remains unchanged across time slots. While this fixed deployment reduces migration costs, it may lead to significant accumulated backhaul expenses due to device mobility. To mitigate those issues, we introduce a scheduling algorithm to strike a balance between the migration cost and the backhaul cost. The core principle for this scheduling algorithm is: The migration will only occur when the overall accumulated backhaul cost surpasses the weighted migration energy cost, i.e., $E_{\text{back}}^{\text{sum}}\leq \beta E_{\text{mig}}^t$, where the overall accumulated backhaul cost is defined as $E_{\text{back}}^{\text{sum}} \triangleq \sum_{s=t'}^t E_{\text{back}}^s$, $t'$ and $t$ represent the time slot of the last occurrence of digital twin migration and the current time slot, respectively, and $\beta$ is the weight factor. We let $\bm{b}^t = \bm{a}^t$, and the overall migration energy cost~$E_{\text{mig}}^t$ can be obtained from Theorem~\ref{3_1}. Besides, the scheduling problem should be jointly considered across multiple time slots to satisfy the maximum AoI constraint. Therefore, we propose a time-efficient cyclic scheduling algorithm to satisfy the AoI constraint. There is an inherent tradeoff between the AoIs and the energy consumption, i.e., frequent synchronization can reduce the average AoIs but increase the energy cost. To reduce energy consumption, we let the cyclic period be $\Gamma$ time slots, balancing the need to avoid frequent synchronization while maintaining the AoI constraint. The dynamic adjustment of the cyclic period will be explored in our future endeavors. Let the general cyclic scheduling sequence for every $\Gamma$ time slots be denoted by ${\bm{\Pi}}$, which can be initialized following a specific order. Besides, let the set of devices scheduled on time slot~$t$ following ${\bm{\Pi}}$ be denoted by ${\mathcal{K}}_{t}^{{\bm{\Pi}}}$. The proposed lightweight online scheduling algorithm is summarized in Algorithm~1.
The complexity of Algorithm~1 in each iteration is the same as the sum of the complexity of \textbf{P3.1} and \textbf{P3.2}. Both are equivalent to finding the maximum weight perfect matching in a bipartite graph, which can be efficiently solved using the Hungarian algorithm with a complexity of $O(V^3)=O(K^3)$~\cite{papadimitriou2013combinatorial}. Consequently, the overall complexity of Algorithm~\ref{alg1} is $O(TK^3)$, making it efficient for practical applications.

\begin{algorithm}\label{alg1}
\small
\caption{Online lightweight algorithm for \textbf{P1}}
\SetAlgoNlRelativeSize{0} 
\SetNlSty{textbf}{}{} 
\SetKwInOut{KwIn}{Input} 
\SetKwInOut{KwOut}{Output} 
\KwIn{Cyclic scheduling policy ${\bm{\Pi}}$;}
\KwOut{$\bm{a}, \bm{b}, \bm{p}$;}
Initialize $E_{\text{back}}^{\text{sum}} := 0$\;
\For{$t = 1, 2, \ldots, T$}{
    Let $\bm{b}^t = \bm{a}^t$, $\tilde{\mathcal{K}}_t = \mathcal{K}_t^{{\bm{\Pi}}}$, and obtain $\bm{a}^t$, $\bm{p}^t$, and $E_{\text{mig}}^t$ by solving \textbf{P3.1}\;
    \If{$E_{\text{back}}^{\text{sum}} \leq \beta E_{\text{mig}}^t$}{
        Let $\bm{b}^t = \bm{b}^{t-1}$, $\tilde{\mathcal{K}}_t = \mathcal{K}_t^{{\bm{\Pi}}}$, and obtain $\bm{a}^t$, $\bm{p}^t$, and $E_{\text{back}}^t$ by solving \textbf{P3.2}\;
        Let $E_{\text{back}}^{\text{sum}} := E_{\text{back}}^{\text{sum}} + E_{\text{back}}^t$\;
    }
    \Else{
        Let $E_{\text{back}}^{\text{sum}} := 0$\;
    }
}
\end{algorithm}

\section{Numerical Results}

\begin{figure*}[t]
    \centering    
    \begin{minipage}[t]{0.24\textwidth}
        \includegraphics[width=\textwidth]{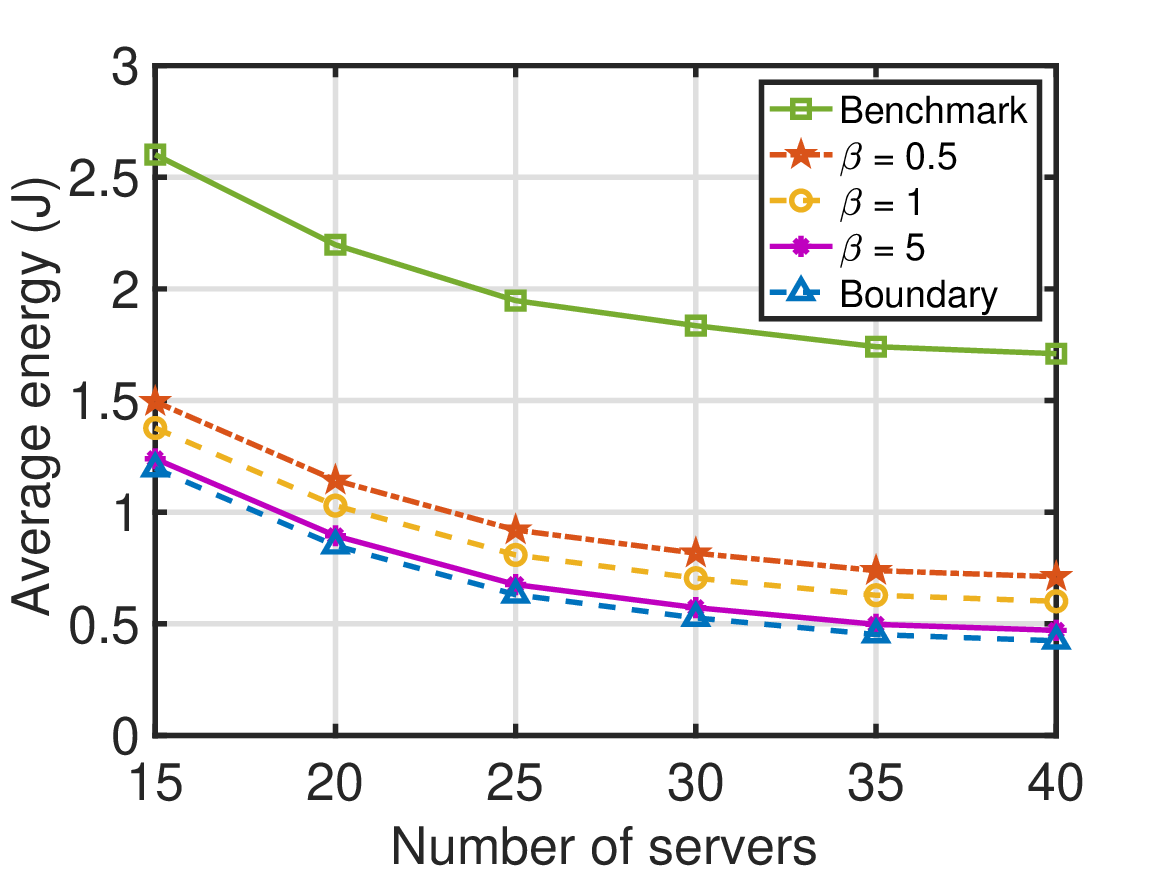}
        \caption{Average energy vs. The number of servers, $M$}
        \label{dt1}
    \end{minipage}
    \hfill
    \begin{minipage}[t]{0.24\textwidth}
        \includegraphics[width=\textwidth]{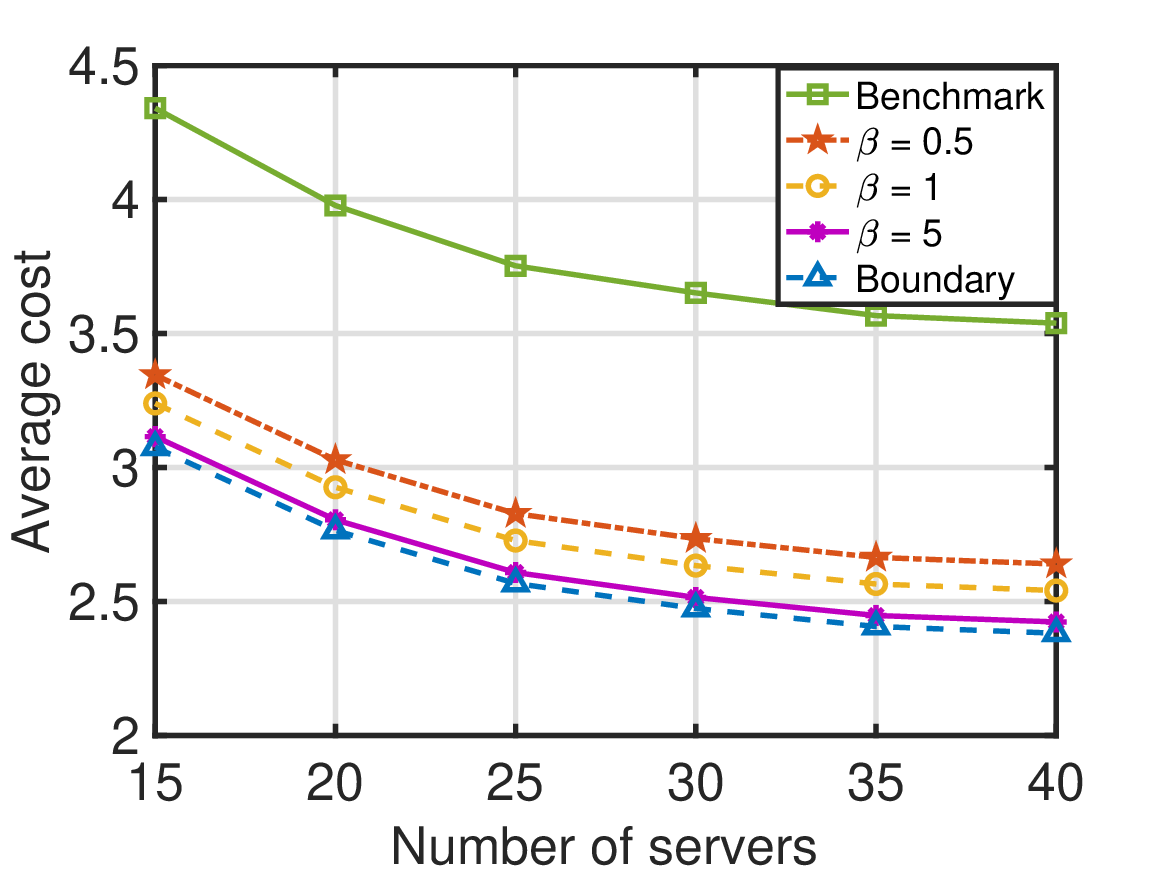}
        \caption{Average cost vs. The number of servers, $M$}
        \label{dt_cost}
    \end{minipage}
    \hfill
    \begin{minipage}[t]{0.24\textwidth}
        \includegraphics[width=\textwidth]{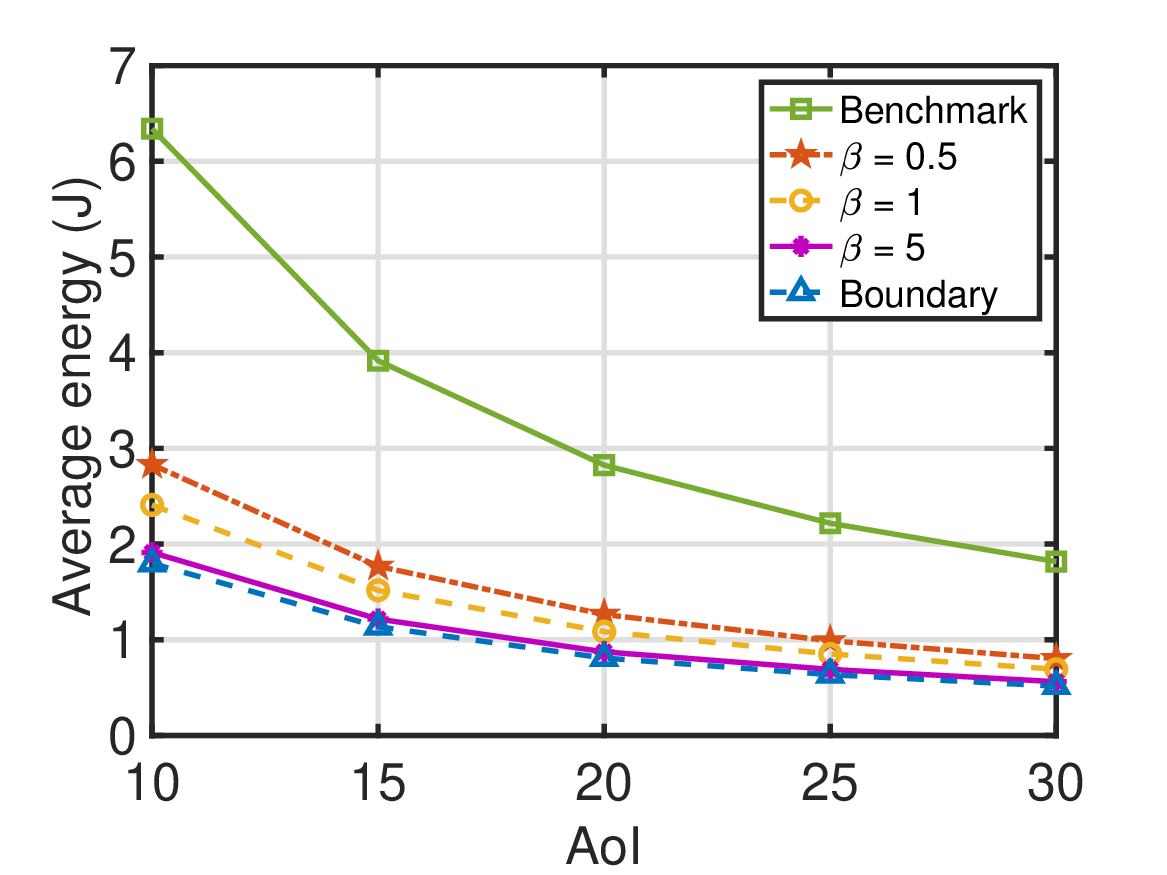}
        \caption{Average energy vs. AoI}
        \label{dt2}
    \end{minipage}
\hfill
    \begin{minipage}[t]{0.24\textwidth}
        \includegraphics[width=\textwidth]{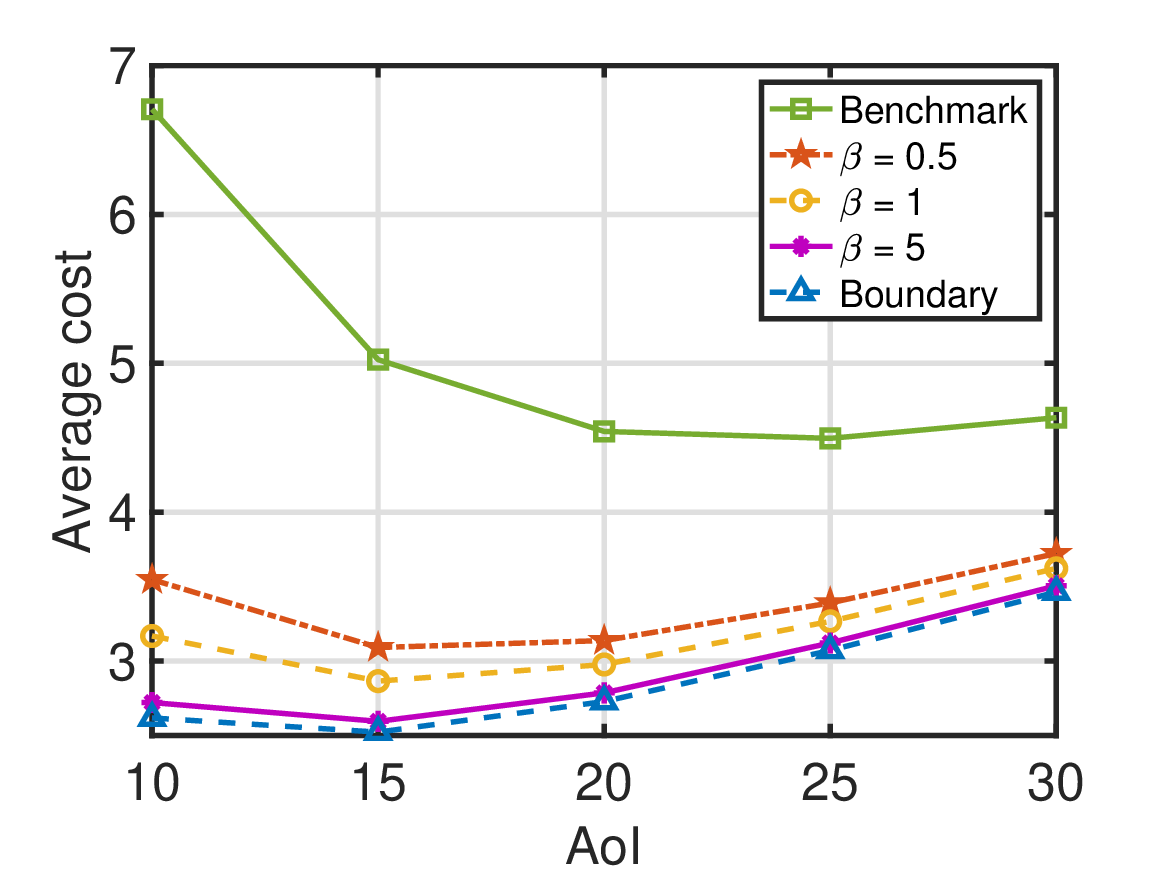}
        \caption{Average cost vs. AoI}
        \label{dt_cost2}
    \end{minipage} 
\end{figure*}

In this section, we evaluate the performance of our proposed joint optimization algorithm using Monte Carlo simulation. We consider a 1000m$\times$ 1000m square MEC network area, where both the edge servers and the devices are uniformly distributed. The synchronization data size of each device~$D_k$ is uniformly generated from $[2,5]$ MB. Similarly, the digital twin size for each device~$\tilde{D}_k$ is uniformly generated from the range $[5,50]$ MB~\cite{chen2024distributed}. We model the path loss effect by $128.1+37.6\log_{10}d$, where $d$ represents the distance between the servers and devices in kilometers. Furthermore, the wireless channel undergoes Rayleigh fading with a variance equal to 1. The noise power spectral density is $-174$ dBm/Hz. The system bandwidth is 10 MHz, and the duration of each time slot is $T_s=0.05$ s, with a total of 100 time slots considered. The backhaul transmission and migration costs are set to be $\eta= \lambda= 1\times 10^{-8}$ J/bit. During each time slot, the devices move randomly in a direction at speeds uniformly generated from $[2, 8]$ m/s. The parameter $\xi$ is established at 0.1. Each data point is averaged over 1000 realizations to ensure robustness and reliability in the evaluation process.

We evaluate the performance of our proposed scheme under different $\beta$ configurations, specifically setting $\beta$ to 0, 0.5, 1, and 5, respectively. When $\beta = 0$, it represents a benchmark scenario where digital twin migration occurs in every time slot. We refer to this scenario as ``Benchmark." Furthermore, we assign a significantly larger value to $\beta$ to delineate the boundaries of average energy consumption and cost through simulation, which we refer to as ``Boundary."
The number of devices~$K$ is 200. The maximum AoI constraint is set to be $\Gamma = 20$. The number of servers~$M$ varies from 10 to 50. We adopt the cyclic scheduling policy to make the average AoI equal to the maximum AoI constraint. Fig.~\ref{dt1} and Fig.~\ref{dt_cost} show the average energy and cost versus the number of servers, respectively. It can be seen that both the average energy and the average cost decrease as the number of servers increases. The reason is that a more significant number of servers provides more freedom of access. Besides, our proposed online scheduling algorithm saves more energy and costs with larger $\beta$. This is attributed to the fact that frequent migration leads to a large amount of energy consumption in this setting. Particularly, when there are 40 servers, our proposed scheme with $\beta=5$ showcases comparable performance to the Boundary approach and achieves energy savings of 21.7\%, 33.8\%, and 72.5\% compared to schemes with $\beta = 1$, $\beta=0.5$, and $\beta=0$, respectively.

Furthermore, we analyze the tradeoff between the AoI and energy. Considering 300 users and 30 servers, we vary the maximum AoI to be 10, 15, 20, 25, and 30. Fig.~\ref{dt2} and Fig.~\ref{dt_cost2} show the average energy and the average cost versus the maximum AoI, respectively. Fig.~\ref{dt2} shows the tradeoff between energy and AoI,  aligning with our earlier observations. The reason is that a smaller AoI necessitates more frequent user scheduling for digital twin synchronization, thereby amplifying energy consumption. Similarly, Fig.~\ref{dt_cost2} indicates that the average cost initially decreases and then increases as the AoI increases. When the maximum AoI is small, heightened energy consumption leads to energy costs dominating the average cost. Conversely, as the AoI increases, energy costs decrease, and the AoI becomes the dominant factor in the average cost.

\section{Conclusion}
In this paper, we studied the digital twin deployment and migration problem in MEC networks with the objective of minimizing the long-term average AoI and energy. To achieve this, the problems of edge association, power control, and digital twin deployment were jointly optimized. An optimal solution was proposed for the special case when $K=M\Gamma$ under a static channel. Furthermore, an efficient online lightweight algorithm was developed for the general case. Through comprehensive numerical analyses, we illustrated that our proposed approach, particularly with a higher parameter value $\beta$, can yield superior performance outcomes. How to utilize $\beta$ to regulate digital twin migration across various scenarios will be further investigated in our future work.

\bibliographystyle{IEEEtran}
\bibliography{IEEEabrv,References}

\def\authornoop#1{}
\begin{thebibliography}{10}
\providecommand{\url}[1]{#1}
\csname url@samestyle\endcsname
\providecommand{\newblock}{\relax}
\providecommand{\bibinfo}[2]{#2}
\providecommand{\BIBentrySTDinterwordspacing}{\spaceskip=0pt\relax}
\providecommand{\BIBentryALTinterwordstretchfactor}{4}
\providecommand{\BIBentryALTinterwordspacing}{\spaceskip=\fontdimen2\font plus
\BIBentryALTinterwordstretchfactor\fontdimen3\font minus
  \fontdimen4\font\relax}
\providecommand{\BIBforeignlanguage}[2]{{%
\expandafter\ifx\csname l@#1\endcsname\relax
\typeout{** WARNING: IEEEtran.bst: No hyphenation pattern has been}%
\typeout{** loaded for the language `#1'. Using the pattern for}%
\typeout{** the default language instead.}%
\else
\language=\csname l@#1\endcsname
\fi
#2}}
\providecommand{\BIBdecl}{\relax}
\BIBdecl

\bibitem{Yang2019IN_6G}
P.~Yang, Y.~Xiao, M.~Xiao, and S.~Li, ``6{G} wireless communications: {Vision}
  and potential techniques,'' \emph{{IEEE} Netw.}, vol.~33, no.~4, pp. 70--75,
  Jul. 2019.

\bibitem{Letaief2022JSAC}
K.~B. Letaief, Y.~Shi, J.~Lu, and J.~Lu, ``Edge artificial intelligence for
  6{G}: {V}ision, enabling technologies, and applications,'' \emph{{IEEE} J.
  Sel. Areas Commun.}, vol.~40, no.~1, pp. 5--36, Jan. 2022.

\bibitem{yfu2023}
Y.~Fu, Y.~Shan, Q.~Zhu, K.~Hung, Y.~Wu, and T.~Q. Quek, ``A distributed
  microservice-aware paradigm for {6G}: {C}hallenges, principles, and research
  opportunities,'' \emph{{IEEE} Netw.}, vol.~38, no.~3, pp. 163--170, May 2023.

\bibitem{Wu2021IOTJ_Digital}
Y.~Wu, K.~Zhang, and Y.~Zhang, ``Digital twin networks: {A} survey,''
  \emph{{IEEE} Internet Things J.}, vol.~8, no.~18, pp. 13\,789--13\,804, Sep.
  2021.

\bibitem{Lu2021IOTJ}
Y.~Lu, S.~Maharjan, and Y.~Zhang, ``Adaptive edge association for wireless
  digital twin networks in 6{G},'' \emph{{IEEE} Internet Things J.}, vol.~8,
  no.~22, pp. 16\,219--16\,230, Nov. 2021.

\bibitem{Alcaraz2022COMST_Digital}
C.~Alcaraz and J.~Lopez, ``Digital twin: {A} comprehensive survey of security
  threats,'' \emph{IEEE Commun. Surv. Tutor.}, vol.~24, no.~3, pp. 1475--1503,
  3rd Quart. 2022.

\bibitem{Wang2022IoT_Mobility}
Z.~Wang, R.~Gupta, K.~Han, H.~Wang, A.~Ganlath, N.~Ammar, and P.~Tiwari,
  ``Mobility digital twin: {Concept}, architecture, case study, and future
  challenges,'' \emph{{IEEE} Internet Things J.}, vol.~9, no.~18, pp.
  17\,452--17\,467, Sep. 2022.

\bibitem{Li2023TMC}
J.~Li, S.~Guo, W.~Liang, J.~Wang, Q.~Chen, Y.~Zeng, B.~Ye, and X.~Jia,
  ``Digital twin-enabled service provisioning in edge computing via continual
  learning,'' \emph{{IEEE} Trans. Mobile Comput.}, vol.~23, no.~6, pp.
  7335--7350, Jun. 2023.

\bibitem{Duong2023MWC}
T.~Q. Duong, D.~Van~Huynh, S.~R. Khosravirad, V.~Sharma, O.~A. Dobre, and
  H.~Shin, ``From digital twin to metaverse: {T}he role of 6{G} ultra-reliable
  and low-latency communications with multi-tier computing,'' \emph{{IEEE}
  Wireless Commun.}, vol.~30, no.~3, pp. 140--146, Jun. 2023.

\bibitem{Tang2022OJCOMS}
F.~Tang, X.~Chen, T.~K. Rodrigues, M.~Zhao, and N.~Kato, ``Survey on digital
  twin edge networks ({DITEN}) toward 6{G},'' \emph{IEEE Open J. Commun. Soc.},
  vol.~3, pp. 1360--1381, 2022.

\bibitem{Zhang2023JSAC}
Y.~Zhang, J.~Hu, and G.~Min, ``Digital twin-driven intelligent task offloading
  for collaborative mobile edge computing,'' \emph{{IEEE} J. Sel. Areas
  Commun.}, vol.~41, no.~10, pp. 3034--3045, Oct. 2023.

\bibitem{Dai2021TII}
Y.~Dai, K.~Zhang, S.~Maharjan, and Y.~Zhang, ``Deep reinforcement learning for
  stochastic computation offloading in digital twin networks,'' \emph{{IEEE}
  Trans. Ind. Informat.}, vol.~17, no.~7, pp. 4968--4977, Jul. 2021.

\bibitem{Van2022TCOM}
D.~Van~Huynh, V.-D. Nguyen, S.~R. Khosravirad, V.~Sharma, O.~A. Dobre, H.~Shin,
  and T.~Q. Duong, ``{URLLC} edge networks with joint optimal user association,
  task offloading and resource allocation: {A} digital twin approach,''
  \emph{{IEEE} Trans. Commun.}, vol.~70, no.~11, pp. 7669--7682, Nov. 2022.

\bibitem{Van2023JSAC}
D.~Van~Huynh, V.-D. Nguyen, S.~R. Khosravirad, G.~K. Karagiannidis, and T.~Q.
  Duong, ``Distributed communication and computation resource management for
  digital twin-aided edge computing with short-packet communications,''
  \emph{{IEEE} J. Sel. Areas Commun.}, vol.~41, no.~10, pp. 3008--3021, Oct.
  2023.

\bibitem{Zheng2023TWC}
J.~Zheng, T.~H. Luan, Y.~Zhang, R.~Li, Y.~Hui, L.~Gao, and M.~Dong, ``Data
  synchronization in vehicular digital twin network: {A} game theoretic
  approach,'' \emph{{IEEE} Trans. Wireless Commun.}, vol.~22, no.~11, pp.
  7635--7647, Nov. 2023.

\bibitem{abd2019role}
M.~A. Abd-Elmagid, N.~Pappas, and H.~S. Dhillon, ``On the role of age of
  information in the {I}nternet of {T}hings,'' \emph{{IEEE} Commun. Mag.},
  vol.~57, no.~12, pp. 72--77, Dec. 2019.

\bibitem{li2023aoi}
J.~Li, S.~Guo, W.~Liang, J.~Wang, Q.~Chen, Z.~Xu, and W.~Xu, ``{AoI}-aware user
  service satisfaction enhancement in digital twin-empowered edge computing,''
  \emph{{IEEE/ACM} Trans. Netw.}, vol.~32, no.~2, pp. 1677--1690, Apr. 2023.

\bibitem{zhang2024aoi}
Y.~Zhang, W.~Liang, Z.~Xu, W.~Xu, and M.~Chen, ``{AoI}-aware inference services
  in edge computing via digital twin network slicing,'' \emph{{IEEE} Trans.
  Serv. Comput.}, 2024, early access, doi: {10.1109/TSC.2024.3436705}.

\bibitem{papadimitriou2013combinatorial}
C.~H. Papadimitriou and K.~Steiglitz, \emph{Combinatorial optimization:
  algorithms and complexity}.\hskip 1em plus 0.5em minus 0.4em\relax Courier
  Corporation, 2013.

\bibitem{chen2024distributed}
Z.~Chen, W.~Yi, A.~Nallanathan, and J.~Chambers, ``Distributed digital twin
  migration in multi-tier computing systems,'' \emph{{IEEE} J. Sel. Topics
  Signal Process.}, vol.~18, no.~1, pp. 109--123, Jan. 2024.

\end{thebibliography}
\end{document}